\let\set\mathbb
\def\<#1>{\langle#1\rangle}
\def\eatspace#1{#1}
\def\step#1#2{%
  \par\kern1pt\dimen144=#2em\advance\dimen144by1.67em
  \hangindent=\dimen144\hangafter=1
  \leavevmode\rlap{\small#1}\kern\dimen144\relax\eatspace}
\newcommand{\N}{\mathbb{N}}
\newcommand{\T}{\mathcal{T}}
\newcommand{\Mnmp}{\mathcal{M}_{n,m,p}}
\newcommand{\Z}{\mathbb{Z}}
\newcommand{\rank}{\operatorname{rank}}
\renewcommand{\Gamma}{\varGamma}
\newcommand{\A}{\mathbf{A}}
\newcommand{\B}{\mathbf{B}}
\def\C{\mathbf{C}}
\newcommand{\range}[1]{\{1,\ldots,#1\}}
\renewcommand{\t}[1]{A^{(#1)} \otimes B^{(#1)} \otimes \Gamma^{(#1)}}
\newcommand{\bc}[1]{B^{(#1)} \otimes \Gamma^{(#1)}}
\newtheorem{theorem}{Theorem}
\newtheorem{proposition}[theorem]{Proposition}
\newtheorem{lemma}[theorem]{Lemma}
\newtheorem{definition}[theorem]{Definition}
\newtheorem{algorithm}{Algorithm}
\newtheorem{corollary}[theorem]{Corollary}
\newtheorem{question}{Question}
\begin{document}

 \title{Flip Graphs for Matrix Multiplication}

%
%

\author[Manuel Kauers]{Manuel Kauers\,$^\ast$}
\author[Jakob Moosbauer]{Jakob Moosbauer\,$^{\dagger}$}
 \address{Manuel Kauers, Institute for Algebra, J. Kepler University Linz, Austria}
 \email{manuel.kauers@jku.at}
 \address{Jakob Moosbauer, Institute for Algebra, J. Kepler University Linz, Austria}
 \email{jakob.moosbauer@jku.at}
 \thanks{$^\ast$ Supported by the Austrian FWF grants P31571-N32 and I6130-N}
 \thanks{$^\dagger$ Supported by the Land Oberösterreich through the LIT-AI Lab}
 
 \begin{abstract}
   We introduce a new method for discovering matrix multiplication schemes
   based on random walks in a certain graph, which we call the flip graph.
   Using this method, we were able to reduce the number of multiplications
   for the matrix formats $(4,4,5)$ and $(5,5,5)$, both in characteristic
   two and for arbitrary ground fields.
 \end{abstract}



\keywords{Bilinear complexity; Strassen's algorithm; Tensor rank}
 
 \maketitle

 \section{Introduction}\label{sec:introduction}

 Nobody knows the computational cost of computing the product of two matrices.  Strassen's
 discovery~\cite{St:Gein} that two $2\times2$-matrices can be multiplied with only 7 multiplications
 in the ground field launched intensive research on the complexity of matrix multiplication during
 the past decades. One branch of this research aims at finding upper (or possibly also lower) bounds on the
 matrix multiplication exponent~$\omega$. The current world record $\omega<2.37188$ is held by
 Duan, Wu and Zhou~\cite{DWZ:FMMv} and only slightly improves the previous record $\omega< 2.37286$
 by Alman and Williams~\cite{AW:ARLM}. These results concern asymptotically large matrix sizes.
 
 Another branch of research on matrix multiplication algorithms concerns specific small matrix
 sizes. For $2\times2$ matrices, it is known that there is no way to do the job with only 6
 multiplications~\cite{Wi:Omo2}, and that Strassen's algorithm is essentially the only way to do it
 with~7~\cite{dG:Ovoo}. Also for multiplying a $2\times 2$ matrix with a $2\times p$ matrix and for
 multiplying a $2\times 3$ matrix with a $3\times 3$ matrix, optimal algorithms are
 known~\cite{HK:OMtN}.  For all other formats, the known upper and lower bounds do not match. For
 example, for the case $3\times 3$ times $3\times 3$, the best known upper bound is
 23~\cite{La:Anaf} and the best known lower bound is 19~\cite{Bl:Otco} unless we impose restrictions
 on the ground domain such as commutativity.

 An upper bound for a specific matrix format can be obtained by stating an explicit matrix
 multiplication scheme with as few multiplications as possible. Such schemes can be discovered by
 various techniques, including hand calculation~\cite{St:Gein,La:Anaf}, numerical
 methods~\cite{Sm:Tbca,SS:TTRo}, SAT solving~\cite{CBH:ANGM,heule19a,HKS:Nwtm}, or machine
 learning~\cite{FBH+:Dfmm}. The latter approach, due to Fawzi et al., has received a lot of attention, even in
 the general public, because it led to an unexpected improvement of the upper bound for multiplying
 two $4\times4$ matrices from 49 to 47 multiplications in characteristic two. They also reduced the
 bound for multiplying two $5\times 5$ matrices from 98 to 96 in characteristic two, and found improvements for
 some formats involving rectangular matrices.
 
 In our quick response~\cite{KM:TFAf} to the paper of Fazwi et al., we announced that we can find
 further schemes for $4\times4$ matrices using 47 multiplications in characteristic two, and that we
 can reduce the number of multiplications required for $5\times5$ matrices in characteristic two to~95.
 In the present paper, we introduce the method by which we found these schemes. 

 We define a graph whose vertices are correct matrix multiplication schemes and where there is an edge
 from one scheme to another if the second can be obtained from the first by some kind of
 transformation. We consider two transformations. One is called a flip and turns a given scheme to a
 different one with the same number of multiplications, and the other is called a reduction and turns
 a given scheme to one with a smaller number of multiplications.  The precise construction of this
 flip graph is given in Sect.~\ref{sec:flip-graph}. In Sect.~\ref{sec:2x2-3x3}, we illustrate
 (parts of) the flip graph for $2\times2$ and $3\times3$ matrices.

 In order to find better upper bounds for a specific format, we start from a known scheme, e.g.,
 the standard algorithm, and perform a random walk in the flip graph. Although reduction edges
 are much more rare than flip edges, it turned out that there are enough of them to reach interesting
 schemes with a reasonable amount of computation time. In particular, we were able to match the
 best known algorithms for all multiplication formats $n\times m$ times $m\times p$ with $n,m,p\leq 5$,
 and found better bounds in four cases. These results are reported in Sect.~\ref{sec:other-formats}. 

 \section{Matrix Multiplication}\label{sec:matr-mult}

 Let $K$ be a field, let $R$ be a $K$-Algebra and let $\A \in R^{n\times m}$, $\B\in R^{m\times
   p}$. Recall that the computation of the matrix product $\C=\A\B$ by a Strassen-like algorithm
 proceeds in two stages.  In the first stage we compute certain products $m_1, \ldots, m_r$ of
 linear combinations of entries of $\A$ and linear combinations of entries of $\B$.  In the second
 stage the entries of $\C$ are obtained as linear combinations of the~$m_i$.  An algorithm of this
 form is called a bilinear algorithm \cite{BCS:ACT}.

 For example, in the case $n=m=p=2$, if we let
 \begin{equation*}
   \A = \begin{pmatrix}
         a_{1,1}&a_{1,2}\\
         a_{2,1}&a_{2,2}\\
       \end{pmatrix},
   \quad
   \B = \begin{pmatrix}
         b_{1,1}&b_{1,2}\\
         b_{2,1}&b_{2,2}
       \end{pmatrix}
   \quad\text{and}\quad
   \C = \begin{pmatrix}
         c_{1,1}&c_{1,2}\\
         c_{2,1}&c_{2,2}
       \end{pmatrix},
 \end{equation*}             
 then Strassen's algorithm computes $\C$ in the following way:
 \begin{alignat*}{4}
   m_1 &= (a_{1,1} + a_{2,2}) (b_{1,1} + b_{2,2}) &\quad\smash{\raisebox{-6.4\baselineskip}{\rule\fboxrule{7\baselineskip}}}\quad c_{1,1} &= m_1 + m_4 - m_5 + m_7 \\
   m_2 &= (a_{2,1} + a_{2,2}) (b_{1,1}) & c_{1,2} &= m_3 + m_5 \\
   m_3 &= (a_{1,1}) (b_{1,2} - b_{2,2}) & c_{2,1} &= m_2 + m_4\\
   m_4 &= (a_{2,2}) (b_{2,1} - b_{1,1}) & c_{2,2} &= m_1 - m_2 + m_3 + m_6.\\
   m_5 &= (a_{1,1} + a_{1,2})(b_{2,2})\\
   m_6 &= (a_{2,1} - a_{1,1}) (b_{1,1}+ b_{1,2})\\
   m_7 &= (a_{1,2} - a_{2,2}) (b_{2,1} + b_{2,2})
 \end{alignat*}
 It is common and convenient to express matrix multiplication in the language of tensors.
 For $2\times2$ matrices, the matrix multiplication tensor is
 \[
 \mathcal{M}_{2,2,2}=\sum_{i,j,k=1}^2 E_{i,j}
 \otimes E_{j,k}\otimes E_{k,i}\in K^{2\times 2}\otimes K^{2\times 2}\otimes K^{2\times 2},
 \]
 where $E_{u,v}$ refers to the matrix having a $1$ at position $(u,v)$ and zeros elsewhere.
 Strassen's algorithm is based on the observation that this tensor can also be written as
 the sum of only seven tensors of the form $A\otimes B\otimes \varGamma$. Indeed, we have
 \begin{alignat*}{2}
   \mathcal{M}_{2,2,2} &= 
   (a_{1,1} + a_{2,2}) \otimes (b_{1,1} + b_{2,2}) \otimes (c_{1,1}+c_{2,2})\\
   &+(a_{2,1} + a_{2,2}) \otimes (b_{1,1}) \otimes (c_{1,2}-c_{2,2})\\
   &+(a_{1,1}) \otimes (b_{1,2} - b_{2,2}) \otimes (c_{2,1}+c_{2,2})\\
   &+(a_{2,2}) \otimes (b_{2,1} - b_{1,1}) \otimes (c_{1,1}+c_{1,2})\\
   &+(a_{1,1} + a_{1,2}) \otimes (b_{2,2}) \otimes (c_{2,1}-c_{1,1})\\
   &+(a_{2,1} - a_{1,1}) \otimes (b_{1,1}+ b_{1,2}) \otimes (c_{2,2})\\
   &+(a_{1,2} - a_{2,2}) \otimes (b_{2,1} + b_{2,2}) \otimes (c_{1,1}),
 \end{alignat*}
 where for better readability we write $a_{i,j},b_{i,j},c_{i,j}$ instead of~$E_{i,j}$.  Readers not
 comfortable with tensor products are welcome to understand $a_{i,j},b_{i,j},c_{i,j}$ as polynomial
 variables and $\otimes$ as multiplication. Note that each row in the expression above corresponds
 to one of the $m_i$'s from before. The first two factors encode their definitions and the third how
 they enter into the~$c_{j,i}$. Note also that the indices in the third factor are swapped in order
 to be consistent with the matrix multiplication literature, where this version is preferred because
 it makes certain equations more symmetric.

 The general definition is as follows.
 \begin{definition}
   Let $n,m,p\in\N$. The matrix multiplication tensor of the format $(n,m,p)$
   is defined as
   \[
   \Mnmp = \sum_{i,j,k = 1}^{n,m,p}a_{i,j}\otimes b_{j,k} \otimes c_{k,i}
   \in K^{n\times m}\otimes K^{m\times p}\otimes K^{p\times n},
   \]
   where $a_{u,v},b_{u,v},c_{u,v}$ refer to matrices of the respective format having a $1$ at position
   $(u,v)$ and zeros in all other positions. 

   Rank-one tensors are non-zero tensors that can be written as $A \otimes B \otimes \varGamma$ for
   certain matrices $A\in K^{n\times m}$, $B\in K^{m\times p}$, $\varGamma\in K^{p\times m}$.
   The rank of a tensor $\T$ is the smallest number~$r$ such that $\T$ can be written as a sum of $r$
   rank-one tensors.
   
   An $(n,m,p)$-matrix multiplication scheme is a finite set $S$ of rank-one tensors whose sum is~$\Mnmp$.
   We call $|S|$ the rank of the scheme.
 \end{definition}
 
 The sum of $k$ linearly independent rank-one tensors need not necessarily have tensor rank
 $k$.  Strassen's discovery amounts precisely to the observation that $\mathcal{M}_{2,2,2}$, which is
 defined as a sum of 8 linearly independent rank-one tensors, has only rank~7.  A more simple
 example is the equation
 \begin{alignat*}1
   &a_{1,1}\otimes b_{1,1}\otimes c_{1,1} \ + \
   a_{1,1}\otimes b_{1,2}\otimes c_{3,1} \ + \
   a_{1,1}\otimes (b_{1,1}+b_{1,2})\otimes c_{2,2}\\
   &=a_{1,1}\otimes b_{1,1}\otimes (c_{1,1}+c_{2,2}) \ + \ a_{1,1}\otimes b_{1,2}\otimes (c_{3,1}+c_{2,2}).
 \end{alignat*}
 If a matrix multiplication scheme contains the three rank-one tensors in the first line, we can
 replace them by the two rank-one tensors in the second line and thereby reduce the rank by one.
 This observation can be generalized as follows. 

  \begin{definition}\label{reducible}
   Let $n,m,p,r \in \N$ and let $S = \{A^{(i)}\otimes B^{(i)}\otimes \Gamma^{(i)}\mid i \in \range{r}\}$ be
   an $(n,m,p)$-matrix multiplication scheme.  We call $S$ \emph{reducible} if there is a nonempty
   set $I\subseteq \range{r}$ such that
   \begin{enumerate}
   \item $\dim_{K} \langle A^{(i)} \rangle_{i\in I} =1$ and
   \item $\{B^{(i)}\mid i\in I\}$ is linearly dependent over $K$,

   \end{enumerate}
   or analogously with $B,A$ or $A,\Gamma$ or $\Gamma,A$ or $B,\Gamma$ or $\Gamma,B$ in place of
   $A,B$.
 \end{definition}
 
 \begin{proposition}\label{reduction}
   Let $n,m,p,r \in \N$ and let $S$ be a reducible $(n,m,p)$-matrix multiplication scheme of rank~$r$.
   Then there exists an $(n,m,p)$-matrix multiplication scheme of rank~$r-1$.
 \end{proposition}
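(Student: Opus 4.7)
The plan is to use condition (1) of Definition~\ref{reducible} to pull out a common left factor from the summands indexed by $I$, and then to use condition (2) to collapse $|I|$ rank-one tensors into $|I|-1$ rank-one tensors with the same sum. By the symmetry built into the definition, I would only treat the $A,B$ case explicitly; the other five cases follow by permuting the roles of the three tensor factors.

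First, I would fix a nonzero $A\in K^{n\times m}$ spanning $\langle A^{(i)}\rangle_{i\in I}$ and write $A^{(i)}=\lambda_i A$ for $i\in I$. No $\lambda_i$ can vanish, since otherwise $A^{(i)}\otimes B^{(i)}\otimes\Gamma^{(i)}$ would be the zero tensor and hence not rank-one. Setting $\tilde B^{(i)}:=\lambda_i B^{(i)}$ and observing that scaling by nonzero scalars preserves linear dependence, the family $\{\tilde B^{(i)}\mid i\in I\}$ remains dependent, so I can pick some $j\in I$ and scalars $\mu_i$ ($i\in I\setminus\{j\}$) with $\tilde B^{(j)}=\sum_{i\in I\setminus\{j\}}\mu_i\tilde B^{(i)}$. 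Then multilinearity of $\otimes$ in each factor yields
\[
\sum_{i\in I} A^{(i)}\otimes B^{(i)}\otimes\Gamma^{(i)}
= \sum_{i\in I} A\otimes \tilde B^{(i)}\otimes \Gamma^{(i)}
= \sum_{i\in I\setminus\{j\}} A\otimes \tilde B^{(i)}\otimes\bigl(\Gamma^{(i)}+\mu_i\Gamma^{(j)}\bigr).
\]
Combining these $|I|-1$ new rank-one tensors with the $r-|I|$ untouched summands for $i\notin I$ would give a set of rank-one tensors whose sum is still $\Mnmp$ and whose cardinality is at most $r-1$. This is precisely the three-to-two collapse exhibited in the display preceding Definition~\ref{reducible}.

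The technical nuisances I anticipate are that some of the new summands may vanish (precisely when $\Gamma^{(i)}+\mu_i\Gamma^{(j)}=0$), or may coincide with each other or with one of the untouched summands; in every such case the resulting scheme has even smaller rank, still bounded by $r-1$. If the cardinality drops strictly below $r-1$, I would pad by splitting a single rank-one tensor $A'\otimes B'\otimes\Gamma'$ into $A'\otimes B'\otimes\Gamma''$ and $A'\otimes B'\otimes(\Gamma'-\Gamma'')$ for a sufficiently generic $\Gamma''$, iterating until the cardinality is exactly $r-1$. The main work is the bookkeeping of scalars across the three tensor slots rather than any deep step; I do not foresee a serious obstacle.
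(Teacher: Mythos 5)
Your proof is correct and takes essentially the same approach as the paper: express one rank-one tensor from the family indexed by $I$ as a linear combination of the others and absorb the scalars into the $\Gamma$-slot, using the one-dimensionality of $\langle A^{(i)}\rangle_{i\in I}$. The only differences are cosmetic normalization (you factor out a common $A$ and rescale the $B^{(i)}$, whereas the paper keeps the $A^{(i)}$ and tracks separate coefficients $\alpha_i,\beta_i$), and your write-up is somewhat more careful in noting that the $\lambda_i$ are nonzero and in addressing the possibility that the resulting set collapses below $r-1$ elements, a degeneracy the paper's own proof silently ignores.
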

 \begin{proof}
   We prove the statement in the case that the conditions in Def.~\ref{reducible} hold for $A$ and
   $B$.  In the other cases the proof works analogously.  Since $\{B^{(i)}\mid i\in I\}$ is linearly
   dependent, there is a $t\in I$ such that $B^{(t)} = \sum_{i\in I\setminus\{t\}} \beta_i B^{(i)}$
   for some $\beta_i \in K$.  Moreover, there are $\alpha_{i}\in K$ such that
   $A^{(t)} = \alpha_{i}A^{(i)}$. Hence, we have
   \begin{alignat*}1
     &\sum_{i\in I}A^{(i)} \otimes B^{(i)} \otimes \Gamma^{(i)}\\
     &=\sum_{i\in I\setminus \{t\}}A^{(i)}\otimes B^{(i)}\otimes \Gamma^{(i)}+\alpha_{i}A^{(i)}\otimes
       \sum_{i\in I\setminus\{t\}} \beta_i B^{(i)} \otimes \Gamma^{(t)}\\
     &=\sum_{i\in I\setminus \{t\}}A^{(i)}\otimes B^{(i)}\otimes \Gamma^{(i)}+ \sum_{i\in
     I\setminus\{t\}} A^{(i)}\otimes  B^{(i)} \otimes \alpha_{i}\beta_{i}\Gamma^{(t)} \\
     &=\sum_{i\in I\setminus \{t\}}A^{(i)}\otimes B^{(i)}\otimes(\Gamma^{(i)}+\alpha_{i}\beta_i\Gamma^{(t)})
   \end{alignat*}
   Therefore,
   \begin{alignat*}{1}
     S'={}&\{A^{(i)}\otimes B^{(i)}\otimes \Gamma^{(i)} \mid i\in \range{r}\setminus I\} \\
        &{}\cup\{A^{(i)}\otimes B^{(i)}\otimes(\Gamma^{(i)}+\alpha_{i}\beta_i\Gamma^{(t)})\mid i\in I\setminus\{t\}\}
   \end{alignat*}
   is a multiplication scheme with rank $r-1$.
 \end{proof}
 We call a scheme~$S'$ constructed as above a \emph{reduction} of~$S$.

 Symmetries of the matrix multiplication tensor $\mathcal{M}_{n,m,p}$ give rise to various ways to
 transform a given matrix multiplication scheme into another one.  For example, because of the
 identity $(\A\B)^\top=\B^\top\A^\top$, exchanging each rank-one tensor $A\otimes B\otimes\varGamma$
 by $B^\top\otimes A^\top\otimes \Gamma^\top$ maps a correct scheme to another correct scheme. A
 correct scheme is also obtained if we replace every rank-one tensor $A\otimes B\otimes\varGamma$ by
 $B\otimes \Gamma\otimes A$. Finally, if $U\in K^{m\times m}$ is invertible, then the identity
 $\A\B=\A UU^{-1}\B$ implies that also replacing every rank-one tensor $A\otimes B\otimes\varGamma$
 by $AU\otimes U^{-1}B\otimes\varGamma$ maps a correct scheme to another one. These transformations
 generate the symmetry group of $\mathcal{M}_{n,m,p}$.  For more details on this group see
 \cite{dG:Ovoo,KM:ANFf}.  We can let the symmetry group act on the set of matrix multiplications and
 call two schemes equivalent if they belong to the same orbit.

 Def.~\ref{reducible} is compatible with the action of the symmetry group. For permutations,
 this is because the definition explicitly allows any two factors to take the roles of $A$ and~$B$,
 and linear dependence is preserved under transposition. Likewise, if some matrices $B^{(i)}$ ($i\in I$)
 are linearly dependent, then so are the matrices $VB^{(i)}W^{-1}$ ($i\in I$) for any invertible
 matrices~$V,W$. Therefore, we can extend Def.~\ref{reducible} from individual matrix multiplication
 schemes to equivalence classes.

 \section{The Flip Graph}\label{sec:flip-graph}

 As reducibility is preserved by the application of a symmetry, the only chance to get from an
 irreducible to a reducible scheme is by a transformation that, at least in general, does not
 preserve symmetry. Flips have this feature. The idea is to subtract something from one rank-one
 tensor and add it to one of the others. This can be done for any two rank-one tensors that
 share a common factor. For example, we have
 \begin{alignat*}1
   &A\otimes B\otimes \varGamma \ + \ A\otimes B'\otimes \varGamma' \\
   &= A\otimes (B+B')\otimes \varGamma \ + \ A\otimes B'\otimes (\varGamma' - \varGamma).
 \end{alignat*}
 In contrast to a symmetry transformation, this transformation only affects two elements of the
 multiplication scheme instead of all. Therefore, we can in general expect the resulting scheme to be
 non-equivalent to the original one.
 \begin{definition}\label{flip}
   Let $n,m,p,r\in \N$ and let $S,S'$ be $(n,m,p)$-matrix multiplication schemes of rank~$r$.  We call
   $S'$ a \emph{flip} of $S$ if there are
   \begin{enumerate}
   \item $T_{1} = A \otimes B \otimes \Gamma \in S$,
   \item $T_{2} = A \otimes B' \otimes \Gamma' \in S$ and
   \item $T \in \{A\otimes B\otimes \Gamma',A\otimes B'\otimes \Gamma\}$
   \end{enumerate}
   such that $(S\setminus\{T_{1},T_{2}\})\cup \{T_{1}+T,T_{2}-T\} = S'$.
   
   The definition is meant to apply analogously for any permutation of $A, B$ and~$\Gamma$.
 \end{definition}
 The union is always disjoint since we require $S$ and $S'$ to be of the same rank. Cases
 where $S'$ would have a smaller rank than $S$ need not be included because in these cases $S'$
 is a reduction of~$S$.
 Note that flips are reversible because if $S' = (S\setminus\{T_{1},T_{2}\})\cup \{T_{1}+T,T_{2}-T\}$,
 then $S = (S'\setminus\{T_{2}-T,T_{1}+T\})\cup \{(T_{2}-T)+T,(T_{1}+T)-T\}$.

 Flips are well-defined for equivalence classes of matrix multiplication schemes in the following
 sense: If $S'$ is a flip of $S$ and $g$ is an element of the symmetry group, 
 then $g(S)$ is a flip of $g(S')$. This is the case since Def.~\ref{flip} applies for arbitrary
 permuations of $A,B,\Gamma$ and because the symmetry transformations act linearly on the
 individual matrices.

 Def.~\ref{flip} introduces flips as rather specific ways of replacing two rank-one tensors in a
 given scheme by some other rank-one tensors. However, it turns out that if we only want to replace
 two rank-one tensors at a time, then there is not much more we can do. Thm.~\ref{th:completeness}
 makes this more precise. 

 \def\1{^{(1)}}\def\2{^{(2)}}\def\3{^{(3)}}\def\4{^{(4)}}

 \begin{lemma}\label{lemma:1}
   Let $U,V$ and $W$ be vector spaces.
   For $i=1,\dots,4$, let $A^{(i)}\in U, B^{(i)} \in V, \Gamma^{(i)}\in W$
   be such that
   \begin{align}\label{eq:1}
     \begin{split}
       &\t{1}+\t{2} \\
       &=\t{3}+\t{4}
     \end{split}
   \end{align}
   and
   \[
   \dim\langle A\1,A\2\rangle = \dim\langle B\1,B\2\rangle = \dim\langle \Gamma\1,\Gamma\2\rangle = 2.
   \]
   Then
   \begin{align*}
     &\bigl\{\t{1},\ \t{2}\bigr\}\\
     & = \bigl\{\t{3},\ \t{4}\bigr\}.
   \end{align*}
 \end{lemma}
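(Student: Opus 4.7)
My approach is to view both sides of \eqref{eq:1} as two rank-$2$ decompositions of the same tensor $T := A^{(1)} \otimes B^{(1)} \otimes \Gamma^{(1)} + A^{(2)} \otimes B^{(2)} \otimes \Gamma^{(2)}$, to reduce $T$ after a suitable change of coordinates to the $2 \times 2 \times 2$ diagonal tensor $e_1 \otimes e_1 \otimes e_1 + e_2 \otimes e_2 \otimes e_2$, and then to exploit the rigidity of rank-one matrices inside a two-dimensional space of diagonal matrices.

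The first step is to establish that $A\3, A\4$ lie in $\langle A\1, A\2 \rangle$ and are themselves linearly independent, with the analogous statements for the $B$'s and $\Gamma$'s. For this I consider the flattening map $V^{*} \otimes W^{*} \to U$ obtained by contracting $T$ with its second and third tensor factors. Using the linear independence of $B\1, B\2$ and of $\Gamma\1, \Gamma\2$, suitable dual functionals extract $A\1$ and $A\2$ individually from the left-hand side, so the image of this flattening equals $\langle A\1, A\2 \rangle$ and has dimension $2$. Applied to the right-hand side, however, the image is contained in $\langle A\3, A\4 \rangle$, forcing this span to have dimension (at least and hence exactly) $2$ and to coincide with $\langle A\1, A\2 \rangle$. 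Symmetric arguments give the analogous conclusions for $B$ and $\Gamma$.

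Now I change bases in $U, V, W$ so that $A^{(i)} = B^{(i)} = \Gamma^{(i)} = e_i$ for $i \in \{1,2\}$; this reduces the whole setup to an identity in $K^2 \otimes K^2 \otimes K^2$ with $T = e_1 \otimes e_1 \otimes e_1 + e_2 \otimes e_2 \otimes e_2$. The flattening $U^{*} \to V \otimes W$ applied to the left-hand side has image equal to the space $D \subset V \otimes W$ of diagonal $2 \times 2$ matrices, and applied to the right-hand side its image sits inside $\langle B\3 \otimes \Gamma\3,\, B\4 \otimes \Gamma\4 \rangle$, so the latter span equals $D$. The only rank-one matrices in $D$ are the nonzero scalar multiples of $e_1 e_1^{\top}$ and $e_2 e_2^{\top}$, so after possibly swapping the labels $3$ and $4$ one obtains $B\3 \otimes \Gamma\3 = c_1\, e_1 \otimes e_1$ and $B\4 \otimes \Gamma\4 = c_2\, e_2 \otimes e_2$ with nonzero scalars $c_i$. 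This forces $B\3, \Gamma\3$ to be multiples of $e_1$ and $B\4, \Gamma\4$ to be multiples of $e_2$, and plugging back into $T = T_3 + T_4$ and comparing the coefficients of the linearly independent tensors $e_1 \otimes e_1$ and $e_2 \otimes e_2$ in the second and third factor determines $A\3, A\4$. The scalars recombine inside each rank-one tensor to give $T_3 = e_1 \otimes e_1 \otimes e_1 = T_1$ and $T_4 = e_2 \otimes e_2 \otimes e_2 = T_2$, as desired.

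The main obstacle will be carefully transferring the non-degeneracy hypotheses from the labels $(1,2)$ to the labels $(3,4)$, so that the simultaneous change of basis on all three tensor factors in the middle step is legitimate; without that one cannot meaningfully reduce to the $2 \times 2 \times 2$ setting. Once that transfer is in place, the conceptual heart of the argument---that the only rank-one matrices inside a two-dimensional subspace of diagonal $2 \times 2$ matrices are the coordinate ones---is essentially immediate.
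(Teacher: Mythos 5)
Your proof is correct, and it takes a genuinely different route from the paper's. The paper argues by brute-force coordinates: it writes $A^{(3)},A^{(4)}$ (and likewise the $B$'s and $\Gamma$'s) as linear combinations of $A^{(1)},A^{(2)}$ plus a component in a complement, substitutes into \eqref{eq:1}, equates the eight coefficients in the $\{1,2\}$-block, and then asserts---via a ``straightforward computation'' that is not spelled out---that the only solutions to the resulting polynomial system force the trivial identification and kill the complementary components. You instead factor the argument into two conceptual steps: first, the one-factor flattenings of the common tensor $T$ show that $\langle A^{(3)},A^{(4)}\rangle=\langle A^{(1)},A^{(2)}\rangle$ (and similarly for $B,\Gamma$), which legitimizes a simultaneous change of basis reducing everything to $K^2\otimes K^2\otimes K^2$ with $T=e_1\otimes e_1\otimes e_1+e_2\otimes e_2\otimes e_2$; second, the flattening $U^*\to V\otimes W$ pins down $\langle B^{(3)}\otimes\Gamma^{(3)},B^{(4)}\otimes\Gamma^{(4)}\rangle$ as the diagonal $2\times 2$ matrices, and the observation that the only rank-one elements there are scalar multiples of $E_{11}$ and $E_{22}$ forces the conclusion after matching coefficients in the $U$ factor. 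Your version makes explicit what the paper's unstated computation is really doing, replaces it with a transparent rigidity fact, and also shows directly (which the paper leaves implicit) that the spans on the $(3,4)$ side must coincide with those on the $(1,2)$ side. The trade-off is that you invoke the language of flattenings and a rank-one rigidity fact, whereas the paper's proof, once the system is written out, requires only elementary polynomial algebra.
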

 \begin{proof}
   Since $A\1$ and $A\2$, $B\1$ and $B\2$, and $\Gamma\1$ and $\Gamma\2$ are linearly independent,
   there are
   \begin{align*}
     &\hspace{-12ex}\alpha_{1},\alpha_{2},\alpha'_{1},\alpha'_{2}, \beta_{1},\beta_{2},\beta'_{1},
     \beta'_{2}, \gamma_{1},\gamma_{2},\gamma'_{1},\gamma'_{2}\in K,\hspace{-12ex}\\
     A,A'&\in U \setminus \langle A\1, A\2 \rangle,\\
     B,B'&\in V \setminus \langle B\1, B\2\rangle,\\
     \Gamma,\Gamma'&\in W \setminus \langle \Gamma\1, \Gamma\2 \rangle
   \end{align*}
   such that
   \begin{alignat*}{3}
     A\3 &= \alpha_{1} A\1 + \alpha_{2}A\2 + A, \qquad& A\4 &= \alpha'_{1} A\1 + \alpha'_{2}A\2 + A', \\  
     B\3 &= \beta_{1} B\1 + \beta_{2}B\2 + B, & B\4 &= \beta'_{1} B\1 + \beta'_{2}B\2 + B', \\  
     \Gamma\3 &= \gamma_{1} \Gamma\1 + \gamma_{2}\Gamma\2 + \Gamma, & \Gamma\4 &= \gamma'_{1} \Gamma\1 + \gamma'_{2}\Gamma\2 + \Gamma'.
   \end{alignat*}
   After plugging these in~(\ref{eq:1}) and equating coefficients we obtain the following system of
   equations:
   \begin{alignat*}{3}
     \alpha_{1}\beta_{1}\gamma_{1} + \alpha'_{1}\beta'_{1}\gamma'_{1} &= 1, \qquad & \alpha_{1}\beta_{1}\gamma_{2} + \alpha'_{1}\beta'_{1}\gamma'_{2} &= 0,\\
     \alpha_{1}\beta_{2}\gamma_{1} + \alpha'_{1}\beta'_{2}\gamma'_{1} &= 0, \qquad & \alpha_{1}\beta_{2}\gamma_{2} + \alpha'_{1}\beta'_{2}\gamma'_{2} &= 0,\\
     \alpha_{2}\beta_{1}\gamma_{1} + \alpha'_{2}\beta'_{1}\gamma'_{1} &= 0, \qquad & \alpha_{2}\beta_{1}\gamma_{2} + \alpha'_{2}\beta'_{1}\gamma'_{2} &= 0,\\
     \alpha_{2}\beta_{2}\gamma_{1} + \alpha'_{2}\beta'_{2}\gamma'_{1} &= 0, \qquad & \alpha_{2}\beta_{2}\gamma_{2} + \alpha'_{2}\beta'_{2}\gamma'_{2} &= 1.
   \end{alignat*}
   A straightforward computation confirms that these equations imply that either
   $\alpha_{1}\beta_{1}\gamma_{1} = \alpha'_{2}\beta'_{2}\gamma'_{2} = 1$ and all other variables
   are $0$ or $\alpha_{2}\beta_{2}\gamma_{2} = \alpha'_{1}\beta'_{1}\gamma'_{1} = 1$ and all other
   variables are $0$.  It follows further that $A,A',B,B',C,C'$ need to be zero, which completes the
   proof.
 \end{proof}

 \begin{lemma}\label{lemma:2}
   Let $U,V$ be vector spaces.
   Let $B\1,B\2,B\3,B\4\in U$ and $\Gamma\1,\Gamma\2,\Gamma\3,\Gamma\4\in V$
   be such that 
   \[
     B\1\otimes\Gamma\1+B\2\otimes\Gamma\2=B\3\otimes\Gamma\3+B\4\otimes\Gamma\4,
   \]
   $\dim\<B\1,B\2>=\dim\<\Gamma\1,\Gamma\2>=2$, and $\Gamma\3,\Gamma\4\neq 0$.
   
   Then $\<B\1,B\2>=\<B\3,B\4>$ and $\<\Gamma\1,\Gamma\2>=\<\Gamma\3,\Gamma\4>$.
 \end{lemma}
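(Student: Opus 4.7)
\medskip

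\noindent\textbf{Proof plan.} The strategy is to reinterpret the equation tensor-theoretically. The left-hand side, call it $T$, is an element of $U\otimes V$ whose rank is exactly $2$: the usual fact that $\sum_i u_i\otimes v_i$ has rank equal to the number of summands whenever both sides are linearly independent applies, using the hypothesis $\dim\<B\1,B\2>=\dim\<\Gamma\1,\Gamma\2>=2$. Since the right-hand side equals $T$, it too has rank~$2$.

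\smallskip

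\noindent First I would use the rank-$2$ condition on the right-hand side to deduce that $B\3,B\4$ are linearly independent and likewise $\Gamma\3,\Gamma\4$. Indeed, if $B\3$ and $B\4$ were linearly dependent, one could collect the two terms into a single rank-one tensor $B\otimes(\lambda\Gamma\3+\Gamma\4)$ (using $\Gamma\3,\Gamma\4\neq 0$ only to rule out the trivial case where both summands vanish), so that $T$ would have rank at most~$1$, contradicting the rank-$2$ conclusion above. The case of $\Gamma\3,\Gamma\4$ is symmetric.

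\smallskip

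\noindent Next I would invoke the standard ``slice space'' characterization of the support of a tensor. For any $\phi\in U^{*}$, apply $\phi\otimes\mathrm{id}_V$ to both sides of the identity; this yields
\[
  \phi(B\1)\Gamma\1+\phi(B\2)\Gamma\2=\phi(B\3)\Gamma\3+\phi(B\4)\Gamma\4.
\]
As $\phi$ ranges over $U^{*}$, the left-hand side sweeps out exactly $\<\Gamma\1,\Gamma\2>$ (using independence of $B\1,B\2$), while the right-hand side sweeps out exactly $\<\Gamma\3,\Gamma\4>$ (using independence of $B\3,B\4$). Hence $\<\Gamma\1,\Gamma\2>=\<\Gamma\3,\Gamma\4>$. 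Dually, by applying functionals from $V^{*}$ to the second factor, one obtains $\<B\1,B\2>=\<B\3,B\4>$.

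\smallskip

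\noindent The only nontrivial step is the first one: verifying rigorously that the right-hand side has rank $2$, which in turn requires that neither $B\3,B\4$ nor $\Gamma\3,\Gamma\4$ collapse. This is exactly where the hypothesis $\Gamma\3,\Gamma\4\neq 0$ is used, to exclude the degenerate possibility that one rank-one summand on the right is already zero. Everything else is a routine consequence of the slice-space description of rank-two tensors in $U\otimes V$, and the proof should be short once that observation is in place.
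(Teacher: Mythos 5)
Your proof is correct, and it takes a genuinely different route from the paper's. You frame everything in terms of the tensor rank of $T=B\1\otimes\Gamma\1+B\2\otimes\Gamma\2$ and the contraction maps $\phi\mapsto(\phi\otimes\mathrm{id}_V)(T)$ and its dual: since the left-hand decomposition has both factor lists linearly independent, $T$ has rank exactly $2$; this forces $B\3,B\4$ and $\Gamma\3,\Gamma\4$ to be independent as well, and then the ``slice spaces'' $\langle\Gamma\1,\Gamma\2\rangle$ and $\langle\Gamma\3,\Gamma\4\rangle$ (resp.\ the $B$-spans) are both identified with the image of the same contraction map, hence equal. The paper instead argues more ``by hand'': it uses $\Gamma^{(i)}\neq 0$ for all $i$ to conclude that $B\1,\dots,B\4$ are linearly dependent, writes $B\1=\mu B\2+\nu B\3+\lambda B\4$, substitutes into the tensor identity, and reads off $B\2\in\langle B\3,B\4\rangle$ (and symmetrically $B\1$), giving the span equality by a dimension count; the $\Gamma$ case is handled identically. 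Your version is more conceptual and makes the symmetry between the two conclusions transparent, at the small cost of invoking the rank/slice-space machinery; the paper's version is more elementary and self-contained. One minor presentational remark: your parenthetical about needing $\Gamma\3,\Gamma\4\neq0$ ``only to rule out the trivial case where both summands vanish'' is slightly off-target --- the hypothesis is in fact redundant once you know $T$ has rank $2$ (a vanishing $\Gamma\3$ would already force the right-hand side to have rank at most $1$), so it is cleaner to say the hypothesis simply lets one avoid spelling out that case distinction.
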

 \begin{proof}
   Since $\Gamma\1, \Gamma\2, \Gamma\3, \Gamma\4\neq 0$ the equality implies that $B\1,B\2,B\3,B\4$
   are linearly dependent.  Therefore, there exist $\mu,\nu,\lambda\in K$ such that $B\1 =
   \mu B\2 + \nu B\3 + \lambda B\4$.  It follows
   \begin{equation*}
     B\2 \otimes (\Gamma\2 + \mu \Gamma\1)
     =B\3 \otimes (\Gamma\3 -\nu \Gamma\1) +
     B\4 \otimes (\Gamma\4 -\lambda \Gamma\1).
   \end{equation*}
   
   Since $\Gamma\1$ and $\Gamma\2$ are linearly independent this implies
   $B\2 \in \langle B\3,B\4 \rangle$. For the same reason we have $B\1 \in \langle B\3,B\4 \rangle$
   and since $B\1$ and $B\2$ are linearly independent it follows $\<B\1,B\2>=\<B\3,B\4>$.
   The equality
   $\<\Gamma\1,\Gamma\2>=\<\Gamma\3,\Gamma\4>$ is shown by the same argument.
 \end{proof}
 
 \begin{theorem}\label{th:completeness} 
   Let $U,V$ and $W$ be vector spaces.
   For $i=1,\dots,4$, let $A^{(i)}\in U\setminus\{0\}, B^{(i)} \in V\setminus\{0\}, \Gamma^{(i)}\in W\setminus\{0\}$
   be such that
   \begin{align}\label{eq:2}
     \begin{split}
       &\t{1}+\t{2} \\
       &=\t{3}+\t{4},
     \end{split}
   \end{align}
   $\dim\<B\1,B\2>=\dim\<\Gamma\1,\Gamma\2>=2$, and
   \begin{align*}
     &\bigl\{\t{1},\ \t{2}\bigr\}\\
     &\neq\bigl\{\t{3},\ \t{4}\bigr\}.
   \end{align*}
   Then the following hold:
   \begin{enumerate}
   \item $\dim\<A\1,A\2,A\3,A\4> = 1$
   \item $\<B\1,B\2> = \<B\3,B\4>$
   \item $\<\Gamma\1,\Gamma\2> = \<\Gamma\3,\Gamma\4>$
   \end{enumerate}
 \end{theorem}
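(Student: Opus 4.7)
The plan is to first establish~(1), and then to deduce~(2) and~(3) by factoring out the common $A$-direction and appealing to Lemma~\ref{lemma:2}. The starting observation is that Lemma~\ref{lemma:1} already handles the generic situation: if $\dim\<A\1,A\2>=2$ held as well, then together with the hypotheses $\dim\<B\1,B\2>=\dim\<\Gamma\1,\Gamma\2>=2$, Lemma~\ref{lemma:1} would force the two sets of rank-one tensors to coincide, contradicting our assumption. Hence $\dim\<A\1,A\2>=1$; set $A:=A\1$, so that $A\2=c_2A$ for some nonzero $c_2\in K$.

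Next I would locate $A\3$ and $A\4$ with respect to the line $\<A>$. Pick a direct-sum complement $U'$ of $\<A>$ in $U$ and write $A\3=\alpha_3 A+A'$, $A\4=\alpha_4 A+A''$ with $A',A''\in U'$. The left-hand side of~(\ref{eq:2}) sits inside $\<A>\otimes V\otimes W$, so projecting the equation onto $U'\otimes V\otimes W$ yields
\[
A'\otimes B\3\otimes\Gamma\3 + A''\otimes B\4\otimes\Gamma\4 = 0.
\]
A short case analysis on $(A',A'')$ now gives $A'=A''=0$. If $A'$ and $A''$ are linearly independent, the two summands lie in complementary direct summands of $U'\otimes V\otimes W$, so each must vanish separately; since $B^{(i)},\Gamma^{(i)}\neq 0$ this forces $A'=A''=0$. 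The case where exactly one of them is zero is immediate. The remaining case is $A''=cA'$ with both nonzero, which gives $B\3\otimes\Gamma\3=-c\,B\4\otimes\Gamma\4$; this forces $B\3\parallel B\4$ and $\Gamma\3\parallel\Gamma\4$, so that $T_3+T_4$ collapses to a single rank-one tensor. But the left-hand side of~(\ref{eq:2}) equals $A\otimes(B\1\otimes\Gamma\1+c_2 B\2\otimes\Gamma\2)$, whose second factor has matrix rank~$2$ by the independence hypotheses on the $B^{(i)}$ and $\Gamma^{(i)}$, a contradiction. Thus $A\3,A\4\in\<A>$ and~(1) follows.

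Finally, having all four $A^{(i)}$ equal to nonzero scalar multiples $c_iA$ of the same vector $A$, one cancels the $A$-factor and absorbs each $c_i$ into $B^{(i)}$ to obtain an equation of the form $\tilde B\1\otimes\Gamma\1+\tilde B\2\otimes\Gamma\2=\tilde B\3\otimes\Gamma\3+\tilde B\4\otimes\Gamma\4$ in $V\otimes W$ with all factors nonzero and with $\dim\<\tilde B\1,\tilde B\2>=\dim\<\Gamma\1,\Gamma\2>=2$. Lemma~\ref{lemma:2} then delivers $\<B\1,B\2>=\<B\3,B\4>$ and $\<\Gamma\1,\Gamma\2>=\<\Gamma\3,\Gamma\4>$, which are conclusions~(2) and~(3). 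The main obstacle is the last subcase of the case analysis above: one has to resist the temptation to declare it vacuous and instead exploit the rank-two nature of $B\1\otimes\Gamma\1+c_2B\2\otimes\Gamma\2$ to reach a contradiction.
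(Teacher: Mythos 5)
Your proof is correct and follows the same overall strategy as the paper: invoke Lemma~\ref{lemma:1} to get $\dim\langle A^{(1)},A^{(2)}\rangle=1$, show that $A^{(3)}$ and $A^{(4)}$ must also lie on that line, and then finish by cancelling the common $A$-factor and applying Lemma~\ref{lemma:2}. The only presentational difference is in the middle step: you project the identity onto a complement of $\langle A^{(1)}\rangle$ and run a case analysis on the residual components $A',A''$, whereas the paper views the right-hand side as a sum of two rank-one elements of $U\otimes(V\otimes W)$, observes it must match the rank-one left-hand side $A^{(1)}\otimes(B^{(1)}\otimes\Gamma^{(1)}+\alpha\,B^{(2)}\otimes\Gamma^{(2)})$, and argues via rank that $A^{(3)},A^{(4)}$ (rather than the paired $B\otimes\Gamma$ factors) must be the linearly dependent pair.
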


 \begin{proof}
   We start by showing the first claim.  If $A\1$ and $A\2$ are linearly independent, then it
   follows from Lemma~\ref{lemma:1} that
   \begin{align*}
     &\bigl\{\t{1},\ \t{2}\bigr\}\\
     & = \bigl\{\t{3},\ \t{4}\bigr\}.
   \end{align*}
   Thus, $A\1$ and $A\2$ must be linearly dependent.  So there exists $\alpha \in K$ such that
   $A\2 = \alpha A\1$.  It follows
   \begin{align*}
     &A\1 \otimes (B\1 \otimes \Gamma\1 + \alpha (B\2 \otimes \Gamma\2)) \\
     & = \t{3}+\t{4}.
   \end{align*}
   This can only be the case if $A\3$ and $A\4$ or $\bc{3}$ and $\bc{4}$ are
   linearly dependent.  If $\bc{3}$ and $\bc{4}$ are linearly dependent, then the right side
   of~(\ref{eq:2}) has rank one, so the left side also have rank $1$.  This would imply that
   either $\dim\<B\1,B\2> = 1$ or $\dim\<\Gamma\1,\Gamma\2>=1$. Therefore $A\3$ and $A\4$ are
   linearly dependent and so $\dim\<A\1,A\2,A\3,A\4> = 1$.
     
   Using that the $A^{(i)}$ are constant multiples of each other, and that
   $(\lambda A)\otimes B\otimes\Gamma=A\otimes(\lambda B)\otimes\Gamma$ for all $\lambda\in K$, we
   may as well assume $A\1=A\2=A\3=A\4$.  The equality~(\ref{eq:2}) then reduces to
   \begin{equation*}
     B\1\otimes\Gamma\1+B\2\otimes\Gamma\2=B\3\otimes\Gamma\3+B\4\otimes\Gamma\4.
   \end{equation*}
   
   The remaining claims follow directly from Lemma~\ref{lemma:2}.   
 \end{proof}

 Informally, Thm.~\ref{th:completeness} says that if we want to replace two rank-one tensors
 nontrivially by two others, then they must agree in one of the factors and this factor cannot
 change.  Additionally, the vector spaces generated by the other factors must stay the same.
 
 We now introduce the main concept of this paper.  
 
 \begin{definition}
   Let $n,m,p\in\set N$ and let $V$ be the set of all orbits of $(n,m,p)$-matrix multiplication
   schemes under the symmetry group and define
   \begin{align*}
     E_{1}&=\{(S,S')\mid S' \text{ is a flip of } S\}\\
     E_{2}&=\{(S,S')\mid S' \text{ is a reduction of } S\}.\\
   \end{align*}
   \begin{enumerate}
   \item The graph $G = (V,E_{1}\cup E_{2})$ is called the $(n,m,p)$-\emph{flip graph}.
     The edges in $E_1$ are called \emph{flips} and the edges in $E_2$ are called \emph{reductions.}
   \item For a given $r\in\set N$, the subgraph of $G$ consisting of all vertices of rank at most~$r$
     is called the $(m,n,p)$-\emph{flip graph} of rank at most~$r$.
   \item For a given $r\in\set N$, the set $\{S\in V:\rank(S)=r\}$ is called the $r$th \emph{level}
     of~$G$.
   \end{enumerate}
 \end{definition}

 Note that flips always connect vertices belonging to the same level, whereas a reduction always leads
 to a vertex belonging to a lower level. Also keep in mind that if there is a flip from $S$ to~$S'$,
 then there is one from $S'$ to~$S$. The flip graph may have loops, these correspond to flips that
 accidentally turn a certain scheme into an equivalent one.

 Since we are interested in schemes of low rank, we are interested in paths containing reductions,
 because these lead us into lower levels. We have not introduced any edges leading to higher levels,
 although that would be an easy thing to do. For example, given a scheme~$S$ containing a rank-one
 tensor $A\otimes B\otimes \Gamma$, we can replace this tensor by the two tensors
 $A\otimes B\otimes (\Gamma-\Gamma')$ and $A\otimes B\otimes\Gamma'$, for arbitrary
 $\Gamma'\in K^{p\times n}$.  The result is a scheme that admits a reduction to~$S$.  We could call this
 step a split of~$\Gamma$.  A split produces a correct matrix multiplication scheme as long as the
 original scheme does not already contain any of the newly added elements.  With this observation,
 we can show that the flip graph is connected.

 \begin{theorem}\label{thm:connected}
   For every $n,m,p\in\set N$, the $(n,m,p)$-flip graph is weakly
   connected, i.e., the undirected graph obtained from it by replacing every reduction by a
   bidirectional edge, is connected. 
 \end{theorem}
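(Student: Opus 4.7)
My plan is to fix the standard scheme
\[
S_{0}=\{a_{i,j}\otimes b_{j,k}\otimes c_{k,i}\mid 1\le i\le n,\ 1\le j\le m,\ 1\le k\le p\}
\]
of rank $nmp$ as a reference vertex and to show that every orbit is weakly connected to it. The moves I will use are flips, reductions, and their reverses, which I will call splits. The paper introduces splits only in the $\Gamma$-factor, but since the reducibility criterion of Def.~\ref{reducible} is symmetric across the three tensor factors, so is the associated reduction, and hence splits in the $A$- and $B$-factors are equally legitimate reverse-reduction edges in the undirected flip graph. Call a rank-one tensor \emph{elementary} if it is a scalar multiple of some $a_{i,j}\otimes b_{j',k}\otimes c_{k',i'}$, and a scheme \emph{elementary} if all of its tensors are. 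The argument has two stages.

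\textbf{Refinement to an elementary scheme.} Given an arbitrary scheme $S$ and a rank-one tensor $A\otimes B\otimes\Gamma\in S$, I would expand $\Gamma=\sum\gamma_{k',i'}c_{k',i'}$ in the standard basis and perform a sequence of $\Gamma$-splits that peel off one basis summand at a time. Repeating this procedure in the $B$- and $A$-factors decomposes $A\otimes B\otimes\Gamma$ into a collection of elementary tensors, and iterating over all members of $S$ yields an elementary scheme $S_{\mathrm{elem}}$ in the same weakly connected component as~$S$.

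\textbf{Collapse to $S_{0}$.} In $S_{\mathrm{elem}}$, any two tensors that are scalar multiples of the same elementary $a\otimes b\otimes c$ have $A$-factors spanning a $1$-dimensional subspace and linearly dependent $B$-factors, so Def.~\ref{reducible} applies. A short computation with the formula in the proof of Prop.~\ref{reduction} shows that the resulting reduction merges the two tensors into a single elementary tensor whose scalar is the sum of the two originals (and removes both when the sum is zero), in particular preserving elementarity. Iterating until at most one tensor survives per elementary support and comparing coefficients with $\Mnmp$ in the standard basis of $K^{n\times m}\otimes K^{m\times p}\otimes K^{p\times n}$ forces every non-matrix-multiplication support to be empty and every $a_{i,j}\otimes b_{j,k}\otimes c_{k,i}$-support to carry coefficient~$1$. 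The remaining scheme is therefore exactly $S_{0}$.

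\textbf{Main obstacle.} A split of $A\otimes B\otimes\Gamma$ into $A\otimes B\otimes\Gamma'$ and $A\otimes B\otimes(\Gamma-\Gamma')$ is a valid edge only when neither new tensor already belongs to the ambient scheme. Over a sufficiently large ground field this is a generic condition on $\Gamma'$; over a small field such as~$\mathbb{F}_{2}$ one may need to insert a preparatory flip, or first apply a symmetry (which is free at the level of orbits), to avoid the finitely many forbidden choices. Verifying that such a dodge is always available, so that the refinement stage never gets stuck, is where I expect the real bookkeeping of the proof to go.
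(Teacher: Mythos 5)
Your overall plan --- split each rank-one tensor down to elementary (standard-basis-supported) tensors and then merge and compare coefficients with $\Mnmp$ --- matches the paper's proof, and the collapse stage is sound. The genuine gap is exactly what you flagged as the ``main obstacle,'' and neither of your proposed escapes closes it. The $\Gamma'$'s you peel off are not free parameters: to land on an elementary scheme you need them to be the specific basis matrices $c_{k,i}$, so there is no genericity to exploit, and the issue is not confined to small fields. Applying a symmetry is useless because you are already working at the level of orbits, so it changes nothing about the scheme. And a collision can occur over any field: if the ambient scheme contains both $(a_{1,1}+a_{2,1})\otimes b_{1,1}\otimes c_{1,1}$ and $a_{1,1}\otimes b_{1,1}\otimes(c_{1,1}+c_{1,2})$ --- a pair that is not reducible, since no two corresponding factors are linearly dependent --- then fully splitting the first tensor and then peeling $c_{1,1}$ off the second both produce $a_{1,1}\otimes b_{1,1}\otimes c_{1,1}$, and the walk is stuck.

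The missing idea is to interleave reductions and splits, handling one tensor slot at a time, rather than splitting a single summand all the way to elementary. The paper first follows reduction edges to an irreducible scheme $S_1$; irreducibility forces the $A\otimes B$ factors of any two elements to be linearly independent, which is precisely the invariant that makes $\Gamma$-splits collision-free. Once all $\Gamma$-factors are standard basis elements, it reduces again until the $A\otimes\Gamma$ factors are pairwise independent, which in turn makes $B$-splits collision-free. A final reduction pass merges elements with the same $(b_{i,j},c_{k,l})$, and coefficient comparison with $\Mnmp$ then forces the standard scheme --- note that the $A$-factor is never split, so there is no third splitting round to worry about. The invariant ``reduce to irreducibility in the relevant slots before you split the next factor'' is the ingredient that makes the refinement stage terminate; without it the argument does not go through, as the example above shows.
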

 \begin{proof}
   For any given scheme $S_0$, we construct a path to the standard algorithm in the underlying
   undirected graph. The first part of the path consists of reductions to an
   irreducible scheme $S_{1}$.  Then any two elements
   $A \otimes B \otimes \Gamma, A' \otimes B' \otimes \Gamma' \in S_{1}$ have the property that
   $A \otimes B$ and $A' \otimes B'$ are linearly independent.  This ensures that splits of $\Gamma$
   lead to pairwise distinct rank-one tensors.  Next we repeatedly split $\Gamma$ for every element
   to construct a scheme $S_{2}$ such that every element of $S_{2}$ can be written as
   $A \otimes B \otimes c_{i,j}$ for some $i,j\in \N$.

   For any two elements $A \otimes B \otimes \Gamma, A' \otimes B' \otimes \Gamma' \in S_{2}$ where
   $A$ and $A'$ are linearly dependent and $\Gamma$ and $\Gamma'$ are linearly dependent there is a
   reduction that combines these two elements.  We follow such reductions until we get a scheme
   $S_{3}$ where any two elements
   $A \otimes B \otimes \Gamma, A' \otimes B' \otimes \Gamma' \in S_{3}$ have the property that
   $A \otimes \Gamma$ and $A' \otimes \Gamma'$ are linearly independent. Then we can repeatedly
   split $B$ for every element to construct a scheme $S_{4}$ where every element has the form
   $A \otimes b_{i,j} \otimes c_{k,l}$.  We then use reductions to combine all elements of $S_{4}$
   with matching $b_{i,j}$ and $c_{k,l}$.  This way we get matrix multiplication scheme $S_{5}$
   which for all $i,j,k,l\in \N$ contains at most one element of the form
   $A \otimes b_{i,j} \otimes c_{k,l}$. Therefore, $S_{5}$ must be the standard algorithm.
 \end{proof}

 We are interested in paths in the flip graph that lead to schemes of low rank. Such paths are more
 likely to exist if there are many flips. Therefore, we select a ground field $K$ for which we can
 expect the number of flips to be large. As specified in Def.~\ref{flip} and justified in
 Thm.~\ref{th:completeness}, a flip is possible whenever a scheme contains two rank-one tensors
 sharing a common factor.  The chances for a common factor are higher if the field~$K$ is small,
 because the smaller the field, the smaller the number of possible factors. For this reason, we
 consider the ground field $K=\set Z_2$ in the following experiments.  For this ground field
 Thm.~\ref{th:completeness} implies that the only way to replace two rank-one tensors in a scheme is
 a flip.

 \begin{corollary}\label{cor:completeness}
   Let $r\in \N$ and let $S$ and $S'$ be two irreducible matrix multiplication schemes of rank $r$
   over $\Z_{2}$ that differ in exactly two elements.  Then $S'$ is a flip of $S$.
 \end{corollary}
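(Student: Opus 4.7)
The plan is to let $T_1,T_2$ denote the two elements of $S\setminus S'$ and $T_3,T_4$ those of $S'\setminus S$, writing $T_i=A\i\otimes B\i\otimes\Gamma\i$. Because $S$ and $S'$ both sum to~$\Mnmp$, the tensors they share cancel and we obtain $T_1+T_2=T_3+T_4$. The sets $\{T_1,T_2\}$ and $\{T_3,T_4\}$ are disjoint by the assumption that $S$ and $S'$ differ in exactly two elements, so in particular the hypothesis $\{T_1,T_2\}\neq\{T_3,T_4\}$ of Theorem~\ref{th:completeness} is satisfied. The overall strategy is to set up that theorem, use its conclusion together with the smallness of $\Z_2$ to reduce to a picture in a four-dimensional space, and finish by a short enumeration.

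First I would argue that two of the three dimensions $d_A=\dim\<A\1,A\2>$, $d_B=\dim\<B\1,B\2>$, $d_\Gamma=\dim\<\Gamma\1,\Gamma\2>$ equal~$2$. Over $\Z_2$, two nonzero vectors have one-dimensional span iff they are equal, so $d_A=d_B=d_\Gamma=1$ would force $T_1=T_2$, a contradiction. If two of these three dimensions are~$1$, then $I=\{1,2\}$ witnesses reducibility of $S$ via Def.~\ref{reducible} (whose ``or analogously'' clause covers all three pairs of factors), contradicting the hypothesis. Hence at least two of them are~$2$, and since Def.~\ref{flip}, Def.~\ref{reducible} and the statement of Theorem~\ref{th:completeness} are each permutation-invariant in the three tensor positions, I may relabel so that $d_B=d_\Gamma=2$.

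Now I would apply Theorem~\ref{th:completeness}. It yields $\dim\<A\1,A\2,A\3,A\4>=1$, $\<B\1,B\2>=\<B\3,B\4>$ and $\<\Gamma\1,\Gamma\2>=\<\Gamma\3,\Gamma\4>$. The first conclusion in particular rules out $d_A=2$, so we must have $d_A=1$; over $\Z_2$ this upgrades to $A\1=A\2=A\3=A\4=:A$. Absorbing the common $A$, I am reduced to the identity $B\1\otimes\Gamma\1+B\2\otimes\Gamma\2=B\3\otimes\Gamma\3+B\4\otimes\Gamma\4$ in the four-dimensional space $\<B\1,B\2>\otimes\<\Gamma\1,\Gamma\2>$, where each of $B\3,B\4$ belongs to $\{B\1,B\2,B\1{+}B\2\}$ and each of $\Gamma\3,\Gamma\4$ to $\{\Gamma\1,\Gamma\2,\Gamma\1{+}\Gamma\2\}$.

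The final step is a finite enumeration of the nine rank-one tensors available in this $\Z_2$-setting, finding all unordered pairs whose sum equals $T_1+T_2$. I expect this to be the main obstacle, in the sense of being the most hands-on part of the argument, but it is bounded and routine: expanding in the basis $\{B\i\otimes\Gamma\j\mid i,j\in\{1,2\}\}$ and matching coordinates with $T_1+T_2$, one finds exactly three such unordered pairs, namely $\{T_1,T_2\}$ itself together with the two pairs obtained by a flip in the sense of Def.~\ref{flip}. Since $\{T_3,T_4\}$ is disjoint from $\{T_1,T_2\}$, it must coincide with one of these two pairs, and so $S'$ is a flip of~$S$.
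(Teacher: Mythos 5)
Your proof is correct and takes essentially the same route as the paper's: pass to the two differing pairs, observe that irreducibility (plus, over $\Z_2$, the exclusion of $T_1=T_2$) forces two of the three spans to be two-dimensional, invoke Theorem~\ref{th:completeness} to pin down a common $A$-factor and the equalities $\<B\1,B\2>=\<B\3,B\4>$, $\<\Gamma\1,\Gamma\2>=\<\Gamma\3,\Gamma\4>$, and then use the smallness of $\Z_2$ to conclude. Your closing enumeration of the nine candidate rank-one tensors is a slightly more explicit rendering of the paper's terse final sentence, but the substance is the same.
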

 \begin{proof}
   Let $\t{1},\t{2}\in S$ and $\t{3},\t{4} \in S'$ be those elements.  Then we have
   \begin{align*}
       &\t{1}+\t{2} \\
       &=\t{3}+\t{4}
   \end{align*}
   and
   \begin{align*}
     &\bigl\{\t{1},\ \t{2}\bigr\}\\
     &\neq \bigl\{\t{3},\ \t{4}\bigr\}.
   \end{align*}
   Since $S$ is not reducible at least two of $\<A\1,A\2>$,$\<B\1,B\2>$ and $\<\Gamma\1,\Gamma\2>$
   have dimension two, say
   \[
     \dim\<B\1,B\2>=\dim\<\Gamma\1,\Gamma\2>=2.
   \]
   Then from Theorem~\ref{th:completeness} follows that $\dim\<A\1,A\2,A\3,A\4> = 1$ and therefore
   we also have $A\1=A\2=A\3=A\4$.  Moreover, $\<B\1,B\2> = \<B\3,B\4>$ and
   $\<\Gamma\1,\Gamma\2> = \<\Gamma\3,\Gamma\4>$.  Because of $K=\Z_{2}$, this can only be if
   $B\3 = B\1+B\2$ or $\Gamma\3 = \Gamma\1 + \Gamma\2$ and likewise for $B\4$ and $\Gamma\4$.  Thus,
   $S'$ is a flip of $S$.
 \end{proof}

 \section{$2\times 2$ matrices and $3\times 3$ matrices}\label{sec:2x2-3x3}

 \begin{figure*}
   \begin{center}
     \includegraphics{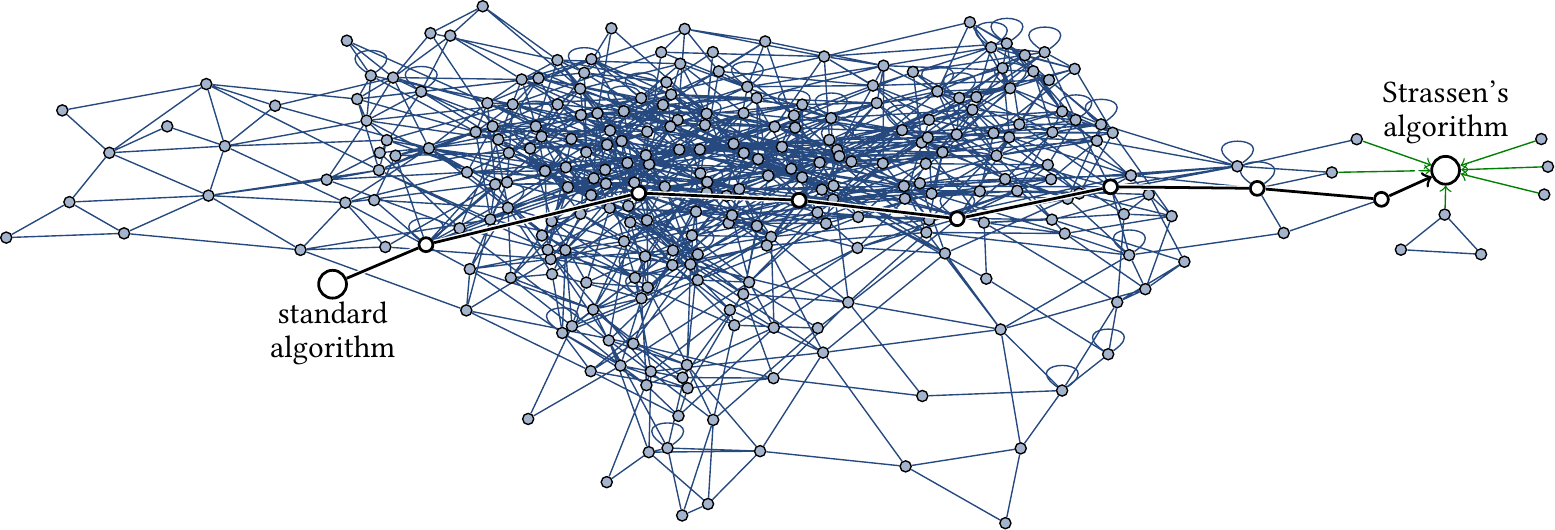}
   \end{center}
   \caption{In the $(2,2,2)$-flip graph of rank at most~8, this figure shows the component
     containing the standard algorithm.  Flips are depicted by undirected edges and reductions by
     directed edges.}
   \label{fig:2x2}
 \end{figure*}
  
 The $(2,2,2)$-flip graph of rank at most~8 for $K=\set Z_2$ is not too big. Fig.~\ref{fig:2x2}
 shows the connected component to which the standard algorithm belongs. It has 272 vertices, each
 representing the orbit of one multiplication scheme, and 1183 edges, 7 of which are reductions
 (shown by green arrows). The component also contains Strassen's algorithm. The distance between the
 standard algorithm and Strassen's algorithm is~8, a path is highlighted in the figure.
 Although the standard algorithm
 allows many flips, it only has one neighbor, because any two schemes obtained by a flip from the
 standard algorithm are equivalent. The diameter of the component is~12.

 Using SAT solvers as in~\cite{heule19a,HKS:Nwtm}, we have tried to find out whether there are other matrix multiplication
 schemes of rank at most~8. For $2\times2$ matrices, modern SAT solvers have no trouble generating many
 multiplication schemes in a short time. Strangely enough, while we found many solutions belonging to the
 connected component shown in Fig.~\ref{fig:2x2}, we only found exactly one solution (up to symmetries)
 that does not belong to this component:
 \begin{alignat*}1
   \mathcal{M}_{2,2,2}&=(a_{1,1} + a_{2,2})\otimes(b_{1,1} + b_{2,2})\otimes(c_{2,1})\\
   &+(a_{2,2})\otimes(b_{1,1} + b_{2,1})\otimes(c_{1,2} + c_{2,1} + c_{2,2})\\
   &+(a_{2,1} + a_{2,2})\otimes(b_{1,1})\otimes(c_{1,2} + c_{2,2})\\
   &+(a_{1,1})\otimes(b_{1,2} + b_{2,2})\otimes(c_{1,1})\\
   &+(a_{1,2})\otimes(b_{1,1} + b_{1,2} + b_{2,1} + b_{2,2})\otimes(c_{1,1} + c_{2,1} + c_{2,2})\\
   &+(a_{1,2} + a_{2,1})\otimes(b_{1,1} + b_{1,2})\otimes(c_{2,2})\\
   &+(a_{1,2} + a_{2,2})\otimes(b_{2,1} + b_{2,2})\otimes(c_{2,1} + c_{2,2})\\
   &+(a_{1,1} + a_{1,2})\otimes(b_{1,1} + b_{1,2} + b_{2,2})\otimes(c_{1,1} + c_{2,1}).
 \end{alignat*}
 This scheme has no neighbors and thus forms a connected component of its own. We do not know
 whether the $(2,2,2)$-flip graph of rank at most~8 has any further components.
 
 For $3\times3$-matrices and $K=\set Z_2$, the flip graph is so large that it is no longer possible
 to determine the entire component of the standard algorithm in the $(3,3,3)$-flip graph of rank at
 most~27. Again, and for the same reason as before, the standard algorithm itself has only one
 neighbor. At distance~2, we found 600 vertices, at distance~3 there are about 20000, and at
 distance~4 nearly 600000. None of them is reducible. Computing the whole neighborhood of distance~5
 is infeasible.

 With long random walks however, it is quite likely to encounter reducible vertices. We employ the
 following simple procedure to search for reductions.
 
 \begin{algorithm}\label{alg:path}
   Input: A matrix multiplication scheme $S$ and a limit $\ell$ for the path length.\\
   Output: A matrix multiplication scheme with rank decreased by one or $\bot$.
   \step10 if $S$ has no neighbours, return~$\bot$
   \step20 for $i=1,\dots,\ell$, do:
   \step31 if $S$ is reducible, then return a reduction of~$S$.
   \step41 if one of the neighbours of $S$ is reducible, then return a reduction of it.
   \step51 Set $S$ to a randomly selected neighbour of~$S$.
   \step60 return $\bot$
 \end{algorithm}

 An implementation of this procedure in C can explore paths of lengths $10^8$ within minutes and
 needs almost no memory.

 Starting from the standard algorithm we easily find schemes of rank~23, matching the record set by Laderman in
 1976~\cite{La:Anaf}, but we found no scheme of rank~22. Restricting the lengths of the random walks 
 to~$10^7$, more than 95\% of the walks reach a scheme of rank~23, and almost all a scheme of
 rank~24. Recall that along a random walk, the rank can only decrease but not increase. In
 Fig.~\ref{pathlength}, we show for 10000 random walks after how many steps they reach a scheme of a
 specific rank.  

 \begin{figure}
   \includegraphics[scale=0.7]{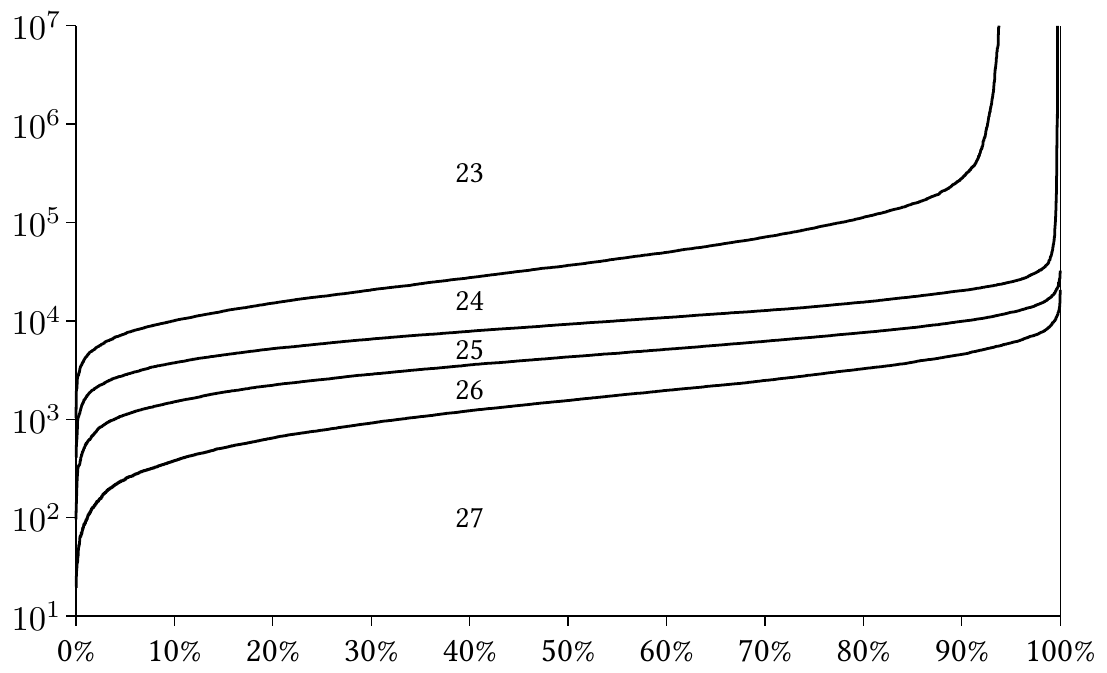}
   \caption{Sparsity of reduction steps. If a point $(x,y)$ belongs to a region labeled~$r$, then
     for $x$ random paths starting from the standard algorithms, the $y$th vertex has rank~$r$.}
   \label{pathlength} \end{figure}

 If we now focus on the flip graph of rank at most~23, it is feasible
 to determine for a given vertex the entire connected component to which it belongs. The schemes we reached
 by random walks from the standard algorithm turned out to belong to 584 different connected components with
 altogether 64061 vertices. The components are quite diverse with respect to size and symmetry; Fig.~\ref{fig:3x3}
 shows three examples.
 The component on the top has 681 vertices and has no automorphisms. Some connected components are isomorphic
 to each other, typically such components enjoy nontrivial symmetries. For example, the component shown at the
 bottom right of Fig.~\ref{fig:3x3} has an automorphism group of order 576 and appears 32 times.

 The small component shown on the left appears 39 times and has an automorphism group of order 8.
 The triangular structure in this component appears often in the flip graph.  It originates from the
 two possibilities to choose $T$ for a flip.  Two flips of a scheme that use the same rank-one
 tensors in the flip always are adjacent.  The square structure in the middle appears whenever there
 is an element that shares a different factor with each of two other elements of a scheme.

 There are also components consisting of a single vertex. Laderman's scheme is such an example.

 \begin{figure}
   \begin{center}
     \includegraphics{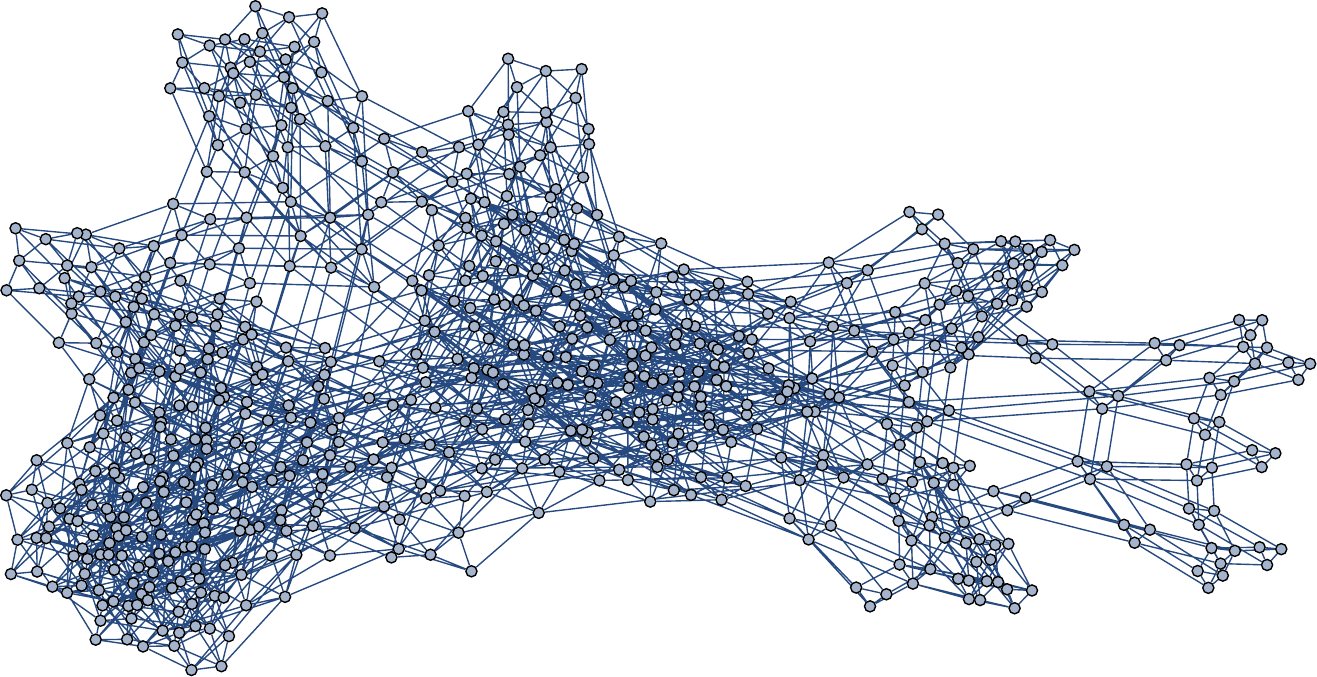}
     
     \includegraphics{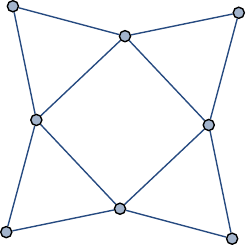}\hfil
     \includegraphics{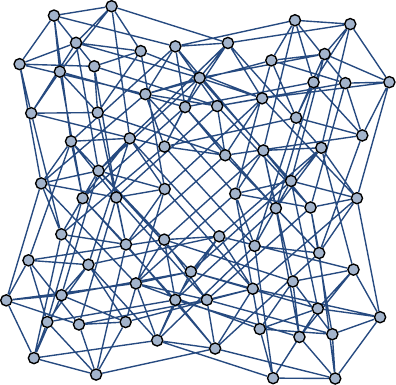}
   \end{center}
   \caption{Three example components of the $(3,3,3)$-flip graph of rank at most~23.}
   \label{fig:3x3}
 \end{figure}

 It seems that not all schemes of rank~23 can be reached from the standard algorithm, because the
 64000 solutions we were able to obtain by random paths starting from the standard algorithm do not
 include all the 17000 solutions found in~\cite{HKS:Nwtm} using SAT solving. Of course, given the size
 of the graph and the lengths of the paths, it is virtually impossible to check whether there really
 is no path or we were just not lucky enough to find it.
 


 \section{Other formats}\label{sec:other-formats}

 For larger matrix formats, it becomes harder to find random paths starting from the standard
 algorithm that go all the way down to a scheme of low rank. An adjusted search strategy that
 simultaneously considers many partial random paths was found to work more efficiently. In this
 variant, we maintain a pool of schemes of a certain rank~$r$ from which we randomly choose starting
 points, then do random walks starting from there until either a length limit or a scheme of rank $r-1$
 is encountered. In the latter case, the new scheme is saved. The procedure is repeated until a
 prescribed number of schemes of rank $r-1$ is reached. Then these schemes form the new pool of
 starting points and the method is repeated until the desired target rank is reached. 

 \begin{algorithm}\label{alg:search}
   Input: A set $P$ of schemes of a certain rank, a path length limit~$\ell$, a pool size limit~$s$, and a target rank~$r$\\
   Output: A set $Q$ of $s$ schemes of rank~$r$
   
   \step10 if $P$ consists of schemes of rank~$r$, return $P$.
   \step20 $Q=\emptyset$
   \step30 while $|Q|<s$ do:
   \step41 apply Alg.~\ref{alg:path} to a random element of~$P$ and~$\ell$.
   \step51 if Alg.~\ref{alg:path} returns a scheme, add it to~$Q$.
   \step60 call the algorithm recursively with $Q$ in place of~$P$.
 \end{algorithm}
 
 
 For our experiments, we used as $P$ the set containing only the standard algorithm, $\ell=10^6$ and
 $s=20000$.  With these settings, we were able to find schemes matching the best known rank bounds
 for all $(n,m,p)$ with $2\leq n,m,p\leq 5$, except for $(n,m,p)=(5,5,5)$. For the latter case,
 starting from the standard algorithm we only get down to rank 97 while Fawzi et
 al.~\cite{FBH+:Dfmm} discovered a scheme of rank 96 (valid mod~2). However, taking their scheme as starting
 point of a random walk, we discovered schemes of rank~95 within seconds. One of these schemes we
 announced in~\cite{KM:TFAf}. For $(n,m,p)=(4,4,5)$, they give a scheme of rank~63, improving the
 previous record by one, while we were able to find a scheme of rank~60 starting from the standard
 algorithm. Also this scheme is only valid mod~2.

 One of the remarkable outcomes of the recent work of Fawzi et al.~\cite{FBH+:Dfmm} is an apparent
 discrepancy of the rank depending on the characteristic of the ground field. Their scheme for
 $(n,m,p)=(4,4,4)$ of rank 47 as well as their scheme for $(n,m,p)=(5,5,5)$ of rank 96 are only
 valid in characteristic two and can be shown not to be the homomorphic image of a scheme for
 $K=\set Q$. As our search in the flip graph uses $K=\set Z_2$, the question is whether our schemes
 are also restricted to ground fields of characteristic two. To answer this question, we have applied
 Hensel lifting~\cite{vG:MCA} to the schemes we discovered.

 A set $S=\{ ((\alpha^{(\ell)}_{i,j}))\otimes((\beta^{(\ell)}_{j,k}))\otimes((\gamma^{(\ell)}_{k,i})):\ell=1,\dots,r\}$
 of rank-one tensors is a matrix multiplication scheme if and only if the cubic equations
 \[
   \sum_{\ell=1}^r \alpha^{(\ell)}_{i_1,i_2}\beta^{(\ell)}_{j_1,j_2}\gamma^{(\ell)}_{k_1,k_2} = \delta_{i_2,j_1}\delta_{j_2,k_1}\delta_{k_2,i_1}
 \]
 are satisfies for all $i_1,i_2,j_1,j_2,k_1,k_2$, where $\delta$ is the Kronecker delta function. These equations
 are known as the Brent equations~\cite{Br:Afmm}, and finding a matrix multiplication scheme is equivalent to solving
 these equations.

 Knowing a solution valid mod $2^s$ for some $s\in\set N$, we can view it as an approximation to
 order~$s$ of a solution valid in the $2$-adic integers, make an ansatz with undetermined
 coefficients for a refinement of this approximation to order $s+1$, plug this ansatz into the Brent
 equations, reduce mod $2^{s+1}$ and divide by~$2^s$.  This leads to a linear system over $\set Z_2$
 for the undetermined coefficients in the ansatz, which can be solved with linear algebra. If it has
 no solution, this proves that the approximation does not admit any refinement to order $s+1$. If it
 does have a solution, we pick one and proceed to refine. Once a decent approximation order is
 reached (we generously used $s=100$ although much less would have been sufficient in most cases),
 we can apply rational reconstruction~\cite{vG:MCA} to find a candidate solution with coefficients in
 $\set Q$ or even $\set Z$.  Whether the reconstruction was successful, i.e., whether the candidate
 solution over $\set Q$ or $\set Z$ is indeed a solution, can be checked easily by plugging it into
 the Brent equations.

 Proceeding as described above, we can confirm the mismatch for $(n,m,p)=(4,4,4)$ and
 $(n,m,p)=(5,5,5)$ observed by Fawzi et al.~\cite{FBH+:Dfmm}: none of our more than 100000 schemes
 of rank 47 for $(n,m,p)=(4,4,4)$ and none of our more than 30000 schemes of rank 95 for
 $(n,m,p)=(5,5,5)$ can be lifted from $\set Z_2$ to $\set Z_4$. However, we were able to lift a
 scheme of rank 97 for $(n,m,p)=(5,5,5)$ from $\set Z_2$ to $\set Z$, thereby breaking the record
 set by Smirnov and Sedoglavic~\cite{SS:TTRo} for this size. Moreover, while none of our schemes of
 rank 60 for $(n,m,p)=(4,4,5)$ could be lifted, we were able to lift some of the schemes of rank~62,
 thereby breaking the record set by Fawzi et al.~\cite{FBH+:Dfmm} for this size. For all other
 formats, we found no improvements but were able to match the best known bounds on the rank. An
 overview over the current state of affairs is given in Table~\ref{tab:ranks}.

 \begin{table}
   \begin{center}
     \begin{tabular}{c|c|c|c}
       $(n,m,p)$ & $K$ & previously best & our rank \\
             &     & known rank & \\\hline
       $(2,2,2)$ & any & 7 \cite{St:Gein} & 7 \\
       $(2,2,3)$ & any & 11 \cite{HK:OMtN} & 11 \\ 
       $(2,2,4)$ & any & 14 \cite{HK:OMtN} & 14 \\ 
       $(2,3,3)$ & any & 15 \cite{HK:OMtN} & 15 \\ 
       $(2,2,5)$ & any & 18 \cite{HK:OMtN} & 18 \\ 
       $(2,3,4)$ & any & 20 \cite{HK:OMtN} & 20 \\
       $(3,3,3)$ & any & 23 \cite{La:Anaf} & 23\\
       $(2,3,5)$ & any & 25 \cite{HK:OMtN} & 25 \\ 
       $(2,4,4)$ & any & 26 \cite{HK:OMtN} & 26 \\ 
       $(3,3,4)$ & any & 29 \cite{Sm:Tbca} & 29 \\ 
       $(2,4,5)$ & any & 33 \cite{HK:OMtN} & 33 \\
       $(3,3,5)$ & any & 36 \cite{Sm:Tbca} & 36 \\
       $(3,4,4)$ & any & 38 \cite{Sm:Tbca} & 38 \\
       $(2,5,5)$ & any & 40 \cite{HK:OMtN} & 40 \\
       $(3,4,5)$ & any & 47 \cite{FBH+:Dfmm} & 47 \\ 
       $(4,4,4)$ & $\set Z_2$ & 47 \cite{FBH+:Dfmm} & 47 \\ 
       $(4,4,4)$ & any & 49 \cite{St:Gein} & 49 \\
       $(3,5,5)$ & any & 58 \cite{SS:TTRo} & 58\\ 
       $(4,4,5)$ & $\set Z_2$ & 63 \cite{FBH+:Dfmm} & \textbf{60} \\
       $(4,4,5)$ & any & 63 \cite{FBH+:Dfmm} & \textbf{62} \\
       $(4,5,5)$ & any & 76 \cite{FBH+:Dfmm} & 76 \\
       $(5,5,5)$ & $\set Z_2$ & 96 \cite{FBH+:Dfmm} & \textbf{95} \\
       $(5,5,5)$ & any & 98 \cite{SS:TTRo} & \textbf{97}
     \end{tabular}
   \end{center}
   \caption{Comparison between best known rank and the rank we found}
   \label{tab:ranks}
 \end{table}

 One scheme for each format and the implementation of the search procedure are available at
 \begin{center}
   \url{https://github.com/jakobmoosbauer/flips.git}.
 \end{center}
 The other schemes are available upon request.
 
 \section{Open questions}

 The flip graph offers a new explanation for the existence of Strassen's algorithm and has led us
 to improved matrix multiplication schemes for some formats. We believe that the flip graph is
 interesting in its own right and deserves to be better understood.
  
 For example, is not clear how well-suited the standard algorithm is as starting point for the
 search procedure. Thm.~\ref{thm:connected} states that if we allow to use reductions backwards, then
 every algorithm is reachable from the standard algorithm. However, we found a $(2,2,2)$-matrix
 multiplication scheme of rank~8 that is not connected to the standard algorithm by a path that uses
 only vertices of rank~8.

 \begin{question}
   For $n,m,p \in \N$, is there a rank $r$ such that all vertices in level~$r$ of the
   $(n,m,p)$-flip graph are reachable from the standard algorithm?
 \end{question}

 If there are schemes of low rank which can not be reached from the standard algorithm, they might
 become reachable if we add additional edges to the graph.  In Corollary~\ref{cor:completeness} we show
 that, at least over $\Z_{2}$, the only way to replace exactly two rank-one tensors in one step is a flip.

 \begin{question}
   Under which condition can we replace more than two rows of a matrix multiplication scheme at the
   same time, such that the resulting scheme is not necessarily reachable by a sequence of flips?
 \end{question}

 Another way to add edges to the graph would be to add for every reduction also the reverse edge.
 However, this would create a lot of additional edges to higher levels. It is not clear at which
 points in the search procedure one should go to a higher level and which of these edges to use.

 \begin{question}
   How can we utilize edges leading to a higher level in the search procedure?
 \end{question}

 The search procedure would also benefit if we could determine whether two vertices belong to the
 same connected component in the current level.  This would allow us to restrict the pool of schemes
 in Algorithm~\ref{alg:search} such that there are not too many vertices in the same component and
 thus the search potentially covers a larger part of the graph.

 \begin{question}
   Given two matrix mulatiplication schemes of the same format and rank, is there an efficient way
   to determine whether they are connected within one level of the flip graph?
 \end{question}

 More generally, in order to search for matrix multiplication schemes of low rank, there might be
 a better way than following random paths in the graph.

 \begin{question}
   Given a matrix multiplication scheme~$S$, is there a systematic way to find reduction
   edges that can be reached from~$S$?   
 \end{question} 

 Finally, we observed that many of the connected components in the $(3,3,3)$-flip graph of rank at most~23
 are highly symmetric. Understanding these symmetries would help understanding the structure of the flip graph
 and might be useful in the search procedure.

 \begin{question}
   What is the significance of the high symmetry in some of the large components in the $(3,3,3)$-flip
   graph of rank at most~23?
 \end{question}

 \par\medskip\noindent\textbf{Acknowledgement.}
 We thank Martina Seidl for offering some of her computing power for conducting the experiments reported in
 this paper and her student Max Heisinger for valuable technical support with their system. 
 
 \bibliographystyle{plain}
 \bibliography{main}
 
\end{document}